\documentclass[letterpaper,11pt]{article}
\usepackage{tabularx} 
\usepackage{amsmath}  
\usepackage{amsthm}
\usepackage{amsfonts}
\usepackage{tikz}
\usetikzlibrary{fit,graphs,graphs.standard,quotes,shapes,arrows,automata}
\usepackage{graphicx} 
\usepackage[margin=1.3in,letterpaper]{geometry} 
\usepackage{cite} 
\usepackage[final]{hyperref} 
\hypersetup{
	colorlinks=true,       
	linkcolor=red,        
	citecolor=blue,        
	filecolor=magenta,     
	urlcolor=blue         
}
\newtheorem{theorem}{Theorem}
\newtheorem{lemma}{Lemma}

\newtheorem{corollary}{Corollary}

\definecolor{bblue}{rgb}{0.2, 0.4, 0.8}
\definecolor{bgreen}{rgb}{0.2, 0.8, 0.2}
\definecolor{bred}{rgb}{0.8, 0.2, 0.2}
\definecolor{lgreen}{rgb}{0.0, 0.48, 0.0}
\definecolor{lpurple}{rgb}{0.48, 0.0, 0.48}

\tikzset{
	treenode/.style = {align=center, inner sep=0pt, text centered,
		font=\sffamily},
	arn_nn/.style = {treenode, circle, bblue, draw=bblue, 
		fill=bblue!10,
		minimum width=0.5em, minimum height=0.5em
	},
	arn_n/.style = {treenode, circle, bblue, draw=bblue, 
		text width=2.0em, very thick, 
		fill=bblue!10},
	arn_g/.style = {treenode, circle, bgreen, draw=bgreen, 
		text width=1.5em, very thick,
		fill=bblue!10},
	arn_r/.style = {treenode, circle, bred, draw=bred, 
		text width=2.0em, very thick,
		fill=bred!10},
	arn_x/.style = {treenode, rectangle, draw=black,
		minimum width=0.5em, minimum height=0.5em}
}


\begin{document}
	
\title{On Automata Recognizing Birecurrent Sets}
\author{Andrew Ryzhikov}
\date{LIGM, Universit\'e Paris-Est}

\maketitle

\begin{abstract} In this note we study automata recognizing birecurrent sets. A set of words is birecurrent if the minimal partial DFA recognizing this set and the minimal partial DFA recognizing the reversal of this set are both strongly connected. This notion was introduced by Perrin, and Dolce et al. provided a characterization of such sets. We prove that deciding whether a partial DFA recognizes a birecurrent set is a PSPACE-complete problem. We show that this problem is PSPACE-complete even in the case of binary partial DFAs with all states accepting and in the case of binary complete DFAs. We also consider a related problem of computing the rank of a partial DFA. 
\end{abstract}

\section{Introduction}

A partial deterministic finite automaton $A$ (which we further simply call a {\em partial automaton}) is a triple $A = (Q, \Sigma, \delta)$, where $Q$ is the set of its states, $\Sigma$ is a finite alphabet and $\delta: Q \times \Sigma \to Q$ is a partial transition function. Note that by default our definition of automata does not include initial and accepting states. If $\delta$ is defined for all pairs of $Q \times \Sigma$, $A$ is called a {\em complete automaton}. A state in an automaton is a {\em sink state} if all letters map this state to itself or is undefined for this state. A partial automaton is called {\em strongly connected} if for any ordered pair of states $s, t$ there exists a word mapping $s$ to $t$.

Given a partial automaton $A = (Q, \Sigma, \delta)$, the {\em rank} of a word $w \in \Sigma^*$ with respect to a set $S \subseteq Q$ (respectively, to the entire automaton $A$) is the size of the image of $S$ (respectively, $Q$) under the mapping defined by $w$ in $A$, i.e. the number $|\{\delta(s, w) \mid s \in S\}|$ ($|\{\delta(s, w) \mid s \in Q\}|$). The {\em rank} of a subset $S$ of states is the minimum nonzero integer among the ranks of all words $w \in \Sigma^*$ with respect to $S$. The rank of a partial automaton is the rank of the whole set of its states. We call a word of rank $1$ with respect to a partial automaton a {\em synchronizing word} for this automaton. A partial automaton having a synchronizing word is called {\em synchronizing} (this definition differs from the definition of a carefully synchronizing partial automaton \cite{Martyugin2014}, as well as from any of the $D_1$-, $D_2$-, or $D_3$-directing NFAs \cite{Martyugin2010}). This is a natural generalization of synchronizing complete automata to partial automata. Indeed, if the input sequence containing an error contains also a synchronizing word not affected by the error, after reading this word we can resynchronize, since we know the only possible active state for the correct sequence. The rank of a complete automaton can be found in polynomial time \cite{Rystsov1992}. Checking whether there exists a word of rank precisely $r$ with respect to a complete automaton is PSPACE-complete for any fixed $r \ge 2$ \cite{Goralcik1995}.

A set of words is {\em birecurrent} if the minimal partial automaton recognizing this set and the minimal partial automaton recognizing the reversal of this set are both strongly connected. This notion was introduced by Perrin in \cite{Perrin2013}. We study minimal partial automata (with initial and accepting states) recognizing birecurrent sets. This is equivalent to saying that the automaton and the determinization of the reversal of the automaton are both strongly connected \cite{Dolce2017}. A set $S \subseteq Q$ in a partial automaton $A = (Q, \Sigma, \delta)$ is said to be {\em saturated} by a word $w$ if $S$ is a pre-image of a subset of $\{\delta(q, w) \mid q \in Q\}$ under the mapping defined by $w$. In other words, $S$ is saturated by $w$, if under mapping defined by $w$ no state in $Q \setminus S$ is mapped to the set $\{\delta(s, w) \mid s \in S\}$, and each state in $S$ is mapped to some state. In \cite{Dolce2017} it is proved that a minimal strongly connected partial automaton $A$ (with initial and accepting states) recognizes a birecurrent set if and only if the set of its accepting states is saturated by a word $w$ such that the rank of $w$ with respect to $A$ is equal to the rank of $A$ (i.e., $w$ is a word of minimum rank with respect to $A$).

The examples of automata recognizing birecurrent sets include trim permutation automata (an automaton is called {\em permutation} if each its letter defines a bijection on the set of states) and automata recognizing a submonoid generated by a bifix code, see \cite{Dolce2017} for more examples.

In this note, we answer an open question stated in \cite{Dolce2017} by showing that checking whether a binary partial automaton recognizes a birecurrent set is PSPACE-complete, even in the cases where all states are accepting or the automaton is complete. We also investigate a related problem of computing the rank of a partial automaton.

\section{Rank of Partial Automata}

First, we prove a lemma that will be used to show that some problems about partial automata recognizing birecurrent sets are in PSPACE.

\begin{lemma} \label{lm-psp-rank}
	Computing the rank of a partial automaton can be done in polynomial space.
\end{lemma}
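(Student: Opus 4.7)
The plan is to reduce the problem to reachability in the power set (subset) automaton and invoke Savitch's theorem. For a partial automaton $A = (Q, \Sigma, \delta)$, form the directed graph $G$ whose vertices are the nonempty subsets of $Q$ and which has an edge labelled $a$ from $S$ to $\{\delta(q,a) \mid q \in S \text{ and } \delta(q,a) \text{ is defined}\}$ whenever the latter set is nonempty. By definition the rank of $A$ equals the minimum size of a vertex of $G$ reachable from the vertex $Q$; this already enforces the nonzero condition automatically, since empty subsets are not vertices.

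My first step would be to give a nondeterministic polynomial-space procedure deciding, for a given integer $r$, whether some subset of size at most $r$ is reachable from $Q$ in $G$. The procedure stores only the current subset $S$ (using $|Q|$ bits), initialized to $Q$; at each step it nondeterministically guesses a letter $a \in \Sigma$, replaces $S$ by its image under $a$, rejects if this image becomes empty, and accepts as soon as $|S| \le r$. The entire workspace is $O(|Q| + \log|\Sigma|)$, so this is an NPSPACE algorithm. By Savitch's theorem, NPSPACE $=$ PSPACE, so the threshold problem lies in PSPACE. Equivalently, one can implement reachability in $G$ directly via the Savitch recursion, since $G$ has at most $2^{|Q|}$ vertices and each vertex has a polynomial-size encoding.

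Finally, to compute the actual rank I would iterate $r$ from $1$ to $|Q|$, run the PSPACE threshold test, and output the smallest $r$ for which it accepts, reusing the same polynomial workspace across iterations. There is no real obstacle here: the only mild point is that we want the minimum \emph{nonzero} image size, but this is handled transparently by rejecting as soon as the tracked subset becomes empty.
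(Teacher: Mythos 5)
Your proposal is correct and follows essentially the same route as the paper: nondeterministically guess a word letter by letter while tracking the image of $Q$, apply Savitch's theorem to place the threshold test in PSPACE, and iterate over the candidate values of the rank. The explicit handling of the empty image (rejecting so that only nonzero ranks are counted) is a point the paper leaves implicit, but the argument is the same.
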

\begin{proof}
	First, note that checking whether there exists a word of rank $k$ can be done in polynomial space. Indeed, we can non-deterministically guess such a word (by applying it letter by letter) and check whether the size of the image of the set of states equals $k$. Then we use Savitch theorem stating that PSPACE = NPSPACE \cite{Sipser2006}. Now it remains to find such $k$ that there exists a word of rank $k$, and either $k = 1$ or there exists no word of rank less than $k$. This can be done by checking all possible values of $k$.
\end{proof}

In fact, computing the rank of a partial automaton is PSPACE-complete, and it is PSPACE-complete even to check that a partial automaton has rank $1$. To show that, we provide a reduction from the following PSPACE-complete problem \cite{Kozen1977}.

\begin{tabular}{||p{36em}}
	~{\sc Finite Automata Intersection} \\
	~{\em Input}: Complete automata $A_1, \ldots, A_k$ (with input and accepting states); \\
	~{\em Output}: Yes if there is a word which is accepted by all automata, No otherwise.
\end{tabular}

\begin{theorem} \label{thm-sync}
	Checking whether a partial automaton is synchronizing is PSPACE-complete.
\end{theorem}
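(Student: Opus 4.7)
The plan has two parts. Membership in PSPACE is immediate from Lemma~\ref{lm-psp-rank}: being synchronizing is exactly having a word of rank~$1$, and the rank procedure of the lemma already decides this in polynomial space.

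For hardness, I would reduce from {\sc Finite Automata Intersection}. Given complete automata $A_1,\ldots,A_k$ over $\Sigma$ with initial states $q_1^0,\ldots,q_k^0$ and (without loss of generality) unique accepting states $f_1,\ldots,f_k$, the goal is to construct in polynomial time a partial automaton $B$ that is synchronizing if and only if $\bigcap_i L(A_i)\neq\emptyset$. The basic template is to take the disjoint union of the $A_i$, add a fresh target state $s$, and extend the alphabet by a new letter $\#$: the $\Sigma$-letters act within each $Q_i$ as in $A_i$ and are undefined on $s$, while $\#$ sends each $f_i$ to $s$, fixes $s$, and is undefined on all remaining states. For the forward direction, if $w\in\bigcap_i L(A_i)$, then reading $w\#$ collapses every $q_i^0$ to $s$---the other states either land at $s$ as well or become undefined along the way---so the image has size~$1$.

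The main difficulty, which I expect to be the crux of the argument, lies in the converse direction: one must rule out trivial synchronization that exploits the partiality of transitions. Under the bare template just described, the single letter $\#$ already synchronizes $B$ by dropping every state on which it is undefined and sending the accepting states to $s$, regardless of whether $\bigcap_i L(A_i)$ is nonempty. To block this, the construction must be enriched with ``sentinel'' gadget states forcing each initial state $q_i^0$ itself to arrive at $s$ rather than be dropped en route, so that the $\Sigma$-portion of any synchronizing word is constrained to be accepted by every $A_i$. Once such a gadget is in place, the converse follows from a case analysis of the potential synchronization target $t$: using that the components $Q_i$ remain disjoint under the action of $\Sigma$ and can only be joined through $s$, the target must be $s$, and its preimage structure forces a common accepted word to exist.
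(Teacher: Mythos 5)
Your PSPACE membership argument and your choice of reduction source match the paper, and you have correctly located the crux: under the bare template the letter $\#$ synchronizes the automaton on its own by dropping every non-accepting state. But the proposal stops exactly where the work begins. ``Sentinel gadget states forcing each initial state $q_i^0$ itself to arrive at $s$ rather than be dropped'' names the desired effect, not a construction, and as stated it cannot be realized by adding states: a letter that is undefined on a non-accepting state of $Q_i$ drops that state no matter which trajectory it lies on, so there is no way to protect the run started at $q_i^0$ while still letting the other runs inside $Q_i$ vanish. The fix the paper uses is to make the checking letter \emph{total} on the component states: $z$ sends accepting states to $Y$ and non-accepting states to a second absorbing state $N$, on which every letter of $\Sigma\cup\{r\}$ acts as a self-loop and only $z$ is undefined. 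Applying $z$ while some active state is rejecting then plants $N$ in the image permanently (or empties the image on a second $z$), so reaching rank $1$ forces all active component states to be accepting at the moment $z$ is read. Two further ingredients you omit are then forced on you. First, once dropping is blocked your forward direction breaks --- after reading $w$ the non-initial states of $Q_i$ need not sit on accepting states --- so the paper adds a reset letter $r$ sending all of $Q_i$ to its initial state and uses $rwz$ as the synchronizing word (this letter also yields the accepted word in the converse direction, as the suffix after the last $r$). Second, one must rule out synchronizing into $N$ when all active states are non-accepting; the paper does this by assuming without loss of generality that one of the input automata accepts all of $\Sigma^*$, which guarantees $Y$ always lies in the image after $z$. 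Without these pieces the reduction is not yet a proof.
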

\begin{proof}
	It follows from Lemma \ref{lm-psp-rank} that the problem is in PSPACE.
	
	We reduce the {\sc Finite Automata Intersection} to prove the hardness. Provided $A_1, \ldots, A_k$, $A_i = (Q_i, \Sigma, \delta_i)$, construct a partial automaton $A = (Q, \Sigma \cup \{r, z\}, \delta)$ as follows. The set $Q$ is the union of the sets of states of $A_1, \ldots, A_k$ together with two new states $Y, N$. We can assume that one of the automata in the input accepts all words in $\Sigma^*$, since adding such an automaton to the input does not change the answer. We define $\delta(q, x) = \delta_i(q, x)$ for $q \in Q_i$, $x \in \Sigma$. We also define $\delta(q, r) = s_i$ for all $q \in Q_i$, where $s_i$ is the initial state in $A_i$. We set $\delta(q, z) = Y$ if $q \in Q_i$ is an accepting state in $A_i$ and $\delta(q, z) = N$ if $q \in Q_i$ is not an accepting state in $A_i$. Finally, we define all letters in $\Sigma \cup \{r\}$ to act on $Y, N$ as self-loops. Thus, we leave undefined only the transitions for the letter $z$ from the states $Y, N$.
	
	We claim that $A$ is synchronizing if and ony if all $A_1, \ldots, A_k$ accept a common word. Indeed, if $w$ is a word accepted by all automata, then $rwz$ is a word synchronizing $A$. Conversely, assume that $A$ is synchronizing and $w$ is a word synchronizing $A$. If the letter $z$ does not occur in $w$, then the rank of $A$ is at least $k+2$. Thus, $w$ contains at least one occurence of $z$. After the first application of $z$, the set of reached states is a subset of $\{Y, N\}$ containing $Y$ (recall that we assume one of the input automata to accept all words in $\Sigma^*$). After applying another letter, the set of the reached states will become empty or unchanged, thus we can assume that $w = w_1z$ and $w_1$ does not contain $z$. The image of $Q$ under the mapping defined by $w_1$ is a subset of the set of accepting states of $A_1, \ldots, A_k$. Let $w_2$ be the suffix of $w_1$ after the last occurence of $r$ (or $w_1$ itself if $w_1$ does not contain $r$). Then the word $w_2$ is accepted by all the automata $A_1, \ldots, A_k$.   
\end{proof}

However, for strongly connected partial automata the situation is different. In fact, the rank of strongly connected partial automata behaves in the way similar to the rank of complete automata.

\begin{theorem}
	A word of minimum nonzero rank with respect to a strongly connected partial automaton can be found in polynomial time.
\end{theorem}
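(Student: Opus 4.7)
The plan is to adapt the polynomial-time rank computation for complete automata to the strongly connected partial setting. The crucial structural fact is an \emph{invariance lemma}: in a strongly connected partial automaton $A$ with rank $r$, every reachable image $T = \delta(Q, w)$ has rank exactly $r$. For the lower bound, observe that $\delta(T, v) = \delta(Q, wv)$, so any nonzero value $|\delta(T, v)|$ is itself a nonzero rank of $A$ and hence at least $r$. For the upper bound, fix any word $v$ with $|\delta(Q, v)| = r$ and any $t \in T$; by strong connectivity there is a word $u$ sending $t$ into the domain of $v$, and then $\delta(T, uv) \subseteq \delta(Q, v)$ has size at most $r$ and contains $\delta(t, uv)$, so it is nonempty.

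With the invariance in hand, I would compute a word of minimum nonzero rank iteratively. Start with $T_0 = Q$ and $w_0 = \varepsilon$. At step $i$, given the current reachable image $T_i$ and a word $w_i$ with $\delta(Q, w_i) = T_i$, attempt to find a word $v$ with $0 < |\delta(T_i, v)| < |T_i|$. Such a $v$ exists if and only if for some ordered pair of distinct states $p, q \in T_i$ one of the following holds: (i) there is a word $v$ with $\delta(p, v) = \delta(q, v)$, both defined (a \emph{merging witness}), or (ii) there is a word $v$ with $\delta(q, v)$ defined but $\delta(p, v)$ undefined (a \emph{dropping witness}). If such a $v$ exists, set $w_{i+1} = w_i v$ and $T_{i+1} = \delta(T_i, v)$; otherwise the algorithm stops.

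Both conditions can be tested in polynomial time per pair by reachability in an appropriate product automaton on $O(n^2)$ states: for (i) one searches for a diagonal state in the usual pair automaton, and for (ii) one searches in a pair automaton whose first coordinate lives in $Q \cup \{\bot\}$ (with $\bot$ absorbing) for any state of the form $(\bot, q')$, restricting transitions so that the second coordinate never becomes undefined. A positive answer yields a witnessing word of length $O(n^2)$. When no pair admits a witness, every word $v$ applied to $T_i$ is either undefined on all of $T_i$ or defined and injective on $T_i$, so $|\delta(T_i, v)| \in \{0, |T_i|\}$; hence the rank of $T_i$ equals $|T_i|$, and by the invariance lemma $|T_i| = r$, so $w_i$ is the desired word.

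Since $|T_i|$ strictly decreases at each step, the loop runs for at most $n$ rounds, producing a word of rank $r$ of polynomial length in polynomial time. The main conceptual obstacle is the invariance lemma and, specifically, its upper bound, which is the place where strong connectivity is indispensable: in a general partial automaton a minimum-rank word could be entirely incompatible with a given reachable image $T$, so the trick of steering a state of $T$ into the domain of such a word is exactly what would fail without strong connectivity. Everything else closely follows the pair-based template used for complete automata, enlarged by the dropping case (ii) that accounts for partial transitions.
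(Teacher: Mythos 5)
Your proposal is correct and follows essentially the same route as the paper: your invariance lemma (every nonempty reachable image has rank exactly $r$, with the upper bound obtained by steering a state into the domain of a minimum-rank word via strong connectivity) is just a restatement of the paper's observation that every word of nonzero rank is a prefix of a word of minimum nonzero rank, and the iterative reduction via pair-automaton reachability, covering both the merging and the dropping case, matches the paper's algorithm. If anything, your write-up spells out the termination condition and the equivalence with pairwise witnesses more explicitly than the paper does.
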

\begin{proof}
	Observe that in a strongly connected partial automaton any word of nonzero rank is a prefix of a word of minimum rank. Indeed, let $A = (Q, \Sigma, \delta)$ be a strongly connected partial automaton. Let $u$ be a word of nonzero rank and $v$ be a word of minimum nonero rank with respect to $A$. Let $q$ be an image of some state of $A$ under the mapping defined by $u$. Since the rank of $v$ is nonzero, there exists a state $q'$ in $A$ such that $\delta(q', v)$ is defined. Let $w$ be a word mapping $q$ to $q'$ (since $A$ is strongly connected, such word exists). Then $uwv$ is a word of minimum nonzero rank with respect to $A$.
	
	Now we describe a polynomial algorithm to find a word of minimum nonzero rank with resprect to $A$. We start with the whole set $S = Q$ of states and an empty word $w$, and at each step we try to to find a pair $q, q'$ of states and a word $v$ such that either exactly one of $\delta(q, v)$ and $\delta(q', v)$ is defined, or they both are defined and $\delta(q, v) = \delta(q', v)$. To find it, we solve a reachability problem in the subautomaton of the power automaton of $A$ induced by the subsets of size at most two \cite{Volkov2008}. If we find such pair of states, we add $v$ as a suffix to $w$ and proceed recursively with the set of states obtained by applying the mapping defined by $v$ to each state in $S$. 
	
	Observe that such pair of states exists if and only if the size of $S$ is more than the rank of $A$. Indeed, since each word of nonzero rank is a prefix of some word of minimum nonzero rank, we can always reduce the size of $S$ by applying a word of minimum nonzero rank if and only if $|S|$ is greater that the rank of $A$. Hence, the word $w$ thus constructed is a word of minimum nonzero rank with respect to $A$. 
\end{proof}

Observe that to merge two states in the described algorithm, we need a word of length at most $\frac{|Q|(|Q| - 1)}{2}$, since it is the number of different two-element subsets of the set of states. To map a state to another state we need a word of length at most $|Q| - 1$. Thus, we obtain the following.

\begin{corollary}
	The length of a shortest word of minimum nonzero rank $r$ in a $n$-state strongly connected partial automaton is at most $\frac{(n-1)((n-r)(n+2) - 2)}{2}$.
\end{corollary}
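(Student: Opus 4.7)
The plan is to bound the length of the word produced by the algorithm of the preceding theorem by putting together the two length estimates highlighted just before the corollary. That algorithm builds a word $w$ of minimum nonzero rank $r$ in $n-r$ phases, each phase shrinking the current image $S$ (starting from $S = Q$) by exactly one element, so there are exactly $n-r$ phases.

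I would decompose a generic phase into a ``routing'' part followed by a ``merging'' part. The merging part is obtained by BFS in the power automaton restricted to subsets of size at most two: since that restricted automaton has $\binom{n}{2} = \frac{n(n-1)}{2}$ genuine two-element vertices, and any BFS path exits this region as soon as it reaches a singleton or the empty set, the merging word has length at most $\frac{n(n-1)}{2}$. The routing part, whose purpose is to drive a pair of states currently in $S$ into a configuration from which the previous BFS-supplied merging word applies, costs at most $n-1$ because strong connectedness of the underlying $n$-state automaton guarantees a connecting word of that length between any two states. The key observation, already used implicitly in the preceding theorem, is that in the very first phase we start with $S = Q$ and so no routing word is needed: any mergeable pair is already available.

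Adding the contributions phase by phase gives $|v_1| \le \binom{n}{2}$ for the first phase and $|v_i| \le (n-1) + \binom{n}{2}$ for each of the remaining $n-r-1$ phases, so
\[
|w| \;\le\; (n-r)\,\binom{n}{2} \;+\; (n-r-1)(n-1).
\]
It remains to rearrange this expression algebraically. Factoring out $(n-1)$ from the bracket $\frac{(n-r)n}{2} + (n-r-1)$ yields $\frac{(n-r)n + 2(n-r) - 2}{2} = \frac{(n-r)(n+2) - 2}{2}$, and multiplication by $(n-1)$ recovers the stated bound $\frac{(n-1)((n-r)(n+2) - 2)}{2}$. There is no real obstacle in this argument; the only point that requires care is justifying that precisely $n-r-1$ routing words suffice rather than $n-r$, which is exactly what saves the $-2$ term inside the stated formula.
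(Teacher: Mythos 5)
Your decomposition --- $(n-r)$ merging words of length at most $\frac{n(n-1)}{2}$ obtained by reachability in the pair automaton, interleaved with $n-r-1$ connecting words of length at most $n-1$ supplied by strong connectedness --- is exactly the accounting the paper intends with the two length estimates it records just before the corollary, and your algebra correctly recovers $\frac{(n-1)((n-r)(n+2)-2)}{2}$. The argument is correct (if anything conservative, since the reachability search over all pairs of the current image already yields a merging word of length at most $\frac{n(n-1)}{2}$ with no separate routing step, and a phase may shrink the image by more than one state), so no further comparison is needed.
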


\section{Automata Recognizing Birecurrent Sets}

\begin{corollary}
	Checking whether a set of states of a partial automaton $A$ can be saturated by a word of minimum rank with respect to $A$ can be done in polynomial space.
\end{corollary}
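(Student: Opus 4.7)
The plan is to follow the same template as Lemma~\ref{lm-psp-rank}: express the property as an NPSPACE computation and appeal to Savitch's theorem. The first step is to compute the rank $r$ of $A$ by invoking Lemma~\ref{lm-psp-rank}, which uses only polynomial space.

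Next, I would non-deterministically guess the candidate saturating word $w$ one letter at a time, maintaining three pieces of data, each of polynomial size in $|Q|$: the image $P_S = \{\delta(s, w) \mid s \in S\}$ of $S$ under the prefix guessed so far, the image $P_T = \{\delta(q, w) \mid q \in Q \setminus S\}$, and a Boolean flag that is switched off as soon as the next guessed letter $a$ is undefined on some $p \in P_S$, which signals that a state of $S$ has lost its image under the guessed prefix. Updating the two sets and the flag after reading a guessed letter is a straightforward polynomial-time, polynomial-space step.

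At a non-deterministically chosen moment I would halt the guessing and accept iff the flag is still on, $P_S$ and $P_T$ are disjoint, and $|P_S \cup P_T| = r$. The first two conditions together encode the saturation condition (every state of $S$ is mapped to some state, and no state of $Q \setminus S$ is mapped into $P_S$), while the last condition certifies that $w$ has minimum rank with respect to $A$. Since each configuration uses polynomial space, the procedure runs in NPSPACE, and Savitch's theorem yields the claimed PSPACE upper bound.

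I do not expect a serious obstacle. The only delicate point is checking that the three end-of-computation conditions really do capture the definition of saturation by a word of minimum rank; in particular, one must verify by a small induction on the length of $w$ that the Boolean flag tracks exactly whether every state of $S$ still has a defined image, using the fact that $\delta(s, wa)$ is defined iff both $\delta(s, w)$ and $\delta(\delta(s, w), a)$ are defined.
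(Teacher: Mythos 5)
Your proposal is correct and follows essentially the same route as the paper: first compute the rank $r$ via Lemma~\ref{lm-psp-rank}, then nondeterministically guess the word letter by letter while tracking the images of $S$ and $Q \setminus S$, and conclude by Savitch's theorem. The paper's own proof is just a brief sketch of this argument; your version supplies the bookkeeping details (the two image sets, the definedness flag, and the acceptance conditions) explicitly and correctly.
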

\begin{proof}
	According to Lemma \ref{lm-psp-rank}, computing the rank of a partial automaton can be done in polynomial space. Thus, the problem of checking whether a set of states is saturated by a word of minimum rank is in NPSPACE, since we can non-deterministically guess a word saturating $S$ and having minimum nonzero rank with respect to $A$. All this conditions can be checked with polynomial space. Since NPSPACE = PSPACE \cite{Sipser2006}, the problem is in PSPACE.
\end{proof}

Now we are going to show that checking whether the whole set of states is saturated by a word of minimum rank is PSPACE-hard for strongly connected partial automata.

The following proof uses the idea of the proof from \cite{Martyugin2014} that careful synchronization of partial automata is PSPACE-complete.

\begin{theorem} \label{tm-partial}
	Checking whether the whole set of states of a partial automaton $A$ is saturated by a word of minimum nonzero rank with respect to $A$ is PSPACE-complete.
\end{theorem}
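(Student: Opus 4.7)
The containment in PSPACE follows from the preceding corollary, so I would focus on PSPACE-hardness. Following the stated hint that the argument should parallel Martyugin's proof for careful synchronization, the plan is to reduce from {\sc Finite Automata Intersection}, building on the construction from the proof of Theorem~\ref{thm-sync} but modifying it so that the ``synchronizing'' word required by the saturation condition must itself be totally defined, and so that the resulting automaton is strongly connected.

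Given complete automata $A_1,\ldots,A_k$ on alphabet $\Sigma$ with initial states $s_i$ and accepting sets $F_i$ (WLOG $k\ge 2$, obtained if needed by appending a trivial automaton accepting $\Sigma^*$), I would build $A=(Q,\Sigma\cup\{r,z\},\delta)$ with $Q=Q_1\cup\ldots\cup Q_k\cup\{Y\}$. On letters of $\Sigma$ each $q\in Q_i$ follows $\delta_i$ and $Y$ has a self-loop. The letter $r$ is a cyclic shift, $\delta(q,r)=s_{(i\bmod k)+1}$ for $q\in Q_i$ and $\delta(Y,r)=s_1$, so that $r^k$ acts as a per-automaton reset while $r$ alone moves between the $Q_i$'s, ensuring that $A$ is strongly connected. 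The letter $z$ sends every accepting state (and $Y$) to $Y$ and is undefined elsewhere. Because $z$ has image $\{Y\}$ on its domain, the minimum nonzero rank of $A$ equals $1$, so saturation of $Q$ by a minimum-rank word is equivalent to the existence of a totally defined word of rank $1$.

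The central claim is then that $A$ admits a totally defined rank-$1$ word if and only if the $A_i$'s share an accepted word. For the easy direction: if $w$ is a common accepted word, then $rwz$ is totally defined (each $A_i$ is complete, so $w$ is applicable after $r$, and after $rw$ every state sits at an accepting state or at $Y$, where $z$ is defined) and has image $\{Y\}$. For the converse, I would first verify by direct image computation that any $\alpha\in(\Sigma\cup\{r\})^*$ has image of size at least $k$ on $Q$: if $\alpha$ contains $r$, the image equals $\{\delta_j(s_j,\gamma)\mid j=1,\ldots,k\}$ with $\gamma$ the suffix of $\alpha$ after its last $r$, and if $\alpha$ has no $r$ the image additionally contains $Y$. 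Since $k\ge 2$, any totally defined rank-$1$ word must contain $z$; letting $\alpha$ be its prefix up to the first $z$, the condition that $z$ be defined on $\delta(Q,\alpha)$ forces either (i) the suffix $\gamma$ of $\alpha$ after its last $r$ to satisfy $\delta_j(s_j,\gamma)\in F_j$ for every $j$, or (ii) $\alpha\in\Sigma^*$ to map each $Q_i$ into $F_i$; in both cases a common accepted word is exhibited.

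I expect the main difficulty to lie in choosing the behaviour of $r$: it must both link the disjoint copies $Q_i$ (to give strong connectivity) and still reduce the saturation analysis to a clean common-word condition. The cyclic-shift choice is what makes both goals compatible, because after any occurrence of $r$ the set of positions reached across the $Q_i$'s is exactly $\{s_1,\ldots,s_k\}$, independently of the prior history; this collapses the analysis to the same clean setting as in Theorem~\ref{thm-sync} while preserving strong connectivity.
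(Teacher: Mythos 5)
Your proof is correct and follows essentially the same reduction as the paper: the same alphabet $\Sigma\cup\{r,z\}$, the same target state $Y$, and the same use of the partially defined letter $z$ to force the portion of the word before its first occurrence (after the last $r$) to be a commonly accepted word. The only difference is that your $r$ cyclically permutes the components (the paper's $r$ resets each $A_i$ to its own initial state and fixes the sink $Y$), which incidentally also yields strong connectivity --- something the paper only arranges later, in Theorem~\ref{tm-partial-scb}, by adding extra letters out of $Y$.
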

\begin{proof}
	We reduce the {\sc Finite Automata Intersection} problem. Given the automata $A_1, \ldots, A_k$, $A_i = (Q_i, \Sigma, \delta_i)$, we construct a partial automaton $A = (Q, \Sigma \cup \{r, z\}, \delta)$ as follows. We set $Q = \cup_{i = 1}^k Q_i \cup \{Y\}$, and define $\delta(q, x) = \delta_i(q, x)$ for $q \in Q_i$, $x \in \Sigma$. Next, we set $\delta(q, r) = s_i$ for $q \in Q_i$, where $s_i$ is the initial state of $A_i$, and $\delta(Y, r) = Y$. Finally, we define $\delta(q, z) = Y$ if $q \in Q_i$ is accepting in $A_i$, $\delta(q, z) = Y$ if $q = Y$, and leave $\delta(q, z)$ undefined otherwise.
	
	Let $w'$ be a word mapping the initial state of $A_1$ to some its accepting state (we can assume that such word exists without loss of generality). Then the word $rw'z$ has rank $1$ with respect to $A$. Thus, the automaton $A$ has rank $1$. 
	
	We shall prove that the set $Q$ is saturated by a word of rank $1$ if and only if all $A_1, \ldots, A_k$ accept a common word. Assume that the set $Q$ is saturated by a word $w$ of rank $1$. Then $\delta(q, w) = Y$ since $Y$ is a sink state, and moreover we can assume that the last letter of $w$ is $z$. Let $w = w_1z$. If $w_1$ does not contain $r$, then it is a word accepted by all automata $A_1, \ldots, A_k$. Otherwise, consider the last occurrence of $r$ in $w_1$ and let $w_2$ be the suffix of $w_1$ after this occurrence. Then $w_2$ is a word accepted by all automata $A_1, \ldots, A_k$.
	
	In the other direction, assume that the word $w$ is accepted by $A_1, \ldots, A_k$. Then the word $rwz$ is of rank $1$ and saturates the set $Q$. 
\end{proof}

\begin{theorem} \label{tm-partial-scb}
	Checking whether the whole set of states of a binary strongly connected partial automaton $A$ is saturated by a word of minimum nonzero rank with respect to $A$ is PSPACE-complete.
\end{theorem}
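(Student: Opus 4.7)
The plan is to reduce once more from {\sc Finite Automata Intersection}, but this time produce a partial automaton that is simultaneously binary and strongly connected. Membership in PSPACE is inherited from the corollary preceding Theorem~\ref{tm-partial}, so only hardness needs work. The starting point is the construction of Theorem~\ref{tm-partial}; I would modify it in two independent stages and then verify that the rank-$1$ saturation property is preserved.

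First, I would make the construction strongly connected. In the automaton of Theorem~\ref{tm-partial}, the state $Y$ is a sink and the letters $r, z$ act irregularly at $Y$. I would redefine (or add) transitions so that from $Y$ one can reach the initial state $s_1$ of $A_1$, for instance by letting a dedicated letter (later folded into the binary alphabet) route $Y \to s_1$ while keeping $z$ acting as a self-loop at $Y$. Together with the existing $\delta(q,r)=s_i$ and the availability of a word $w'$ inside each $A_i$ from $s_i$ to an accepting state, every state becomes mutually reachable. The critical point is that a saturating word of rank~$1$ must still end in $z$ applied after all active states sit in accepting states of the $A_i$, so the forward direction of Theorem~\ref{tm-partial} (a common accepted word yields $rw'z$ saturating $Q$) and the converse (a rank-$1$ saturating word can be truncated to a suffix not containing $r$ that is accepted by all $A_i$) both survive unchanged.

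Second, I would encode the alphabet $\Sigma \cup \{r,z\}$ into two letters $a,b$ using the standard state-blow-up gadget employed in the PSPACE-hardness proof for careful synchronization of binary partial automata in \cite{Martyugin2014}: each original state is replaced by a short decoder chain of auxiliary states, and each original letter is simulated by a fixed-length codeword over $\{a,b\}$; transitions that do not follow a valid codeword are left undefined. A rank-$1$ word $w$ over the original alphabet then yields a rank-$1$ word $\hat w$ over $\{a,b\}$ by letter substitution, and conversely any rank-$1$ word in the binary automaton must, by the undefined transitions in the decoder gadgets, decompose into a concatenation of valid codewords, reducing it to a rank-$1$ word over the original alphabet.

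The main obstacle is ensuring that the two modifications play well together while preserving the saturation condition, which is strictly stronger than mere minimum rank. Adding back-transitions from $Y$ opens the risk that a rank-$1$ word might saturate $Q$ without routing through the $z$-gadget, so the placement of the new edges must guarantee that any saturating run eventually applies $z$ after the $A_i$ have been driven into simultaneously accepting configurations. Dually, the binary decoder gadgets must be designed so that reaching every auxiliary state from every other is possible (preserving strong connectivity) while still blocking any ``ill-formed'' binary word from producing a rank-$1$ image. Once the codeword length and the self-loops at $Y$ under $a$ and $b$ are chosen so that no proper prefix of a codeword collapses the state set, the correctness argument reduces to the one in Theorem~\ref{tm-partial} applied to the decoded word.
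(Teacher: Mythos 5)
Your overall strategy matches the paper's: take the automaton of Theorem \ref{tm-partial}, add outgoing transitions from $Y$ to restore strong connectivity, then binarize with an encoding gadget. But two points in your sketch are genuine gaps rather than routine verifications. First, routing $Y$ only to $s_1$ does not make the automaton strongly connected: from $s_1$ the letters of $\Sigma$ stay inside $Q_1$, the letter $r$ returns to some $s_i$ only from within the corresponding $Q_i$, and $z$ goes to $Y$, so the components $Q_2,\dots,Q_k$ remain unreachable from $Y$. The paper instead adds transitions from $Y$ to a whole family of states $t_1,\dots,t_m$ (one new letter $\ell_i$ per target) chosen precisely so that strong connectivity holds.

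Second, and more importantly, you correctly identify the central danger --- that the new back-edges from $Y$ might let a rank-$1$ word saturate $Q$ without ever driving the $A_i$ into a common accepting configuration --- but you leave its resolution as a design requirement (``the placement of the new edges must guarantee\dots'') rather than proving it. The paper's resolution is short but essential: each new letter $\ell_i$ is defined \emph{only} at $Y$ and undefined everywhere else, and since saturating the whole set $Q$ requires the word to be defined on every state, no $\ell_i$ can occur before the prefix read so far has already collapsed $Q$ to $\{Y\}$; hence the argument of Theorem \ref{tm-partial} applies verbatim to that prefix. Without this observation the reduction is not justified. Finally, your binarization via Martyugin-style decoder chains is a workable alternative to the paper's product construction $Q_b = Q\times\Sigma$ with letters $0$ (advance the letter index) and $1$ (apply the selected letter), but you would still need to check that the chain gadget keeps the automaton strongly connected and that ill-formed binary words cannot be defined on all states; the paper's gadget sidesteps this because the letter $0$ is total and $0^{n-1}1$ resets every state to the first track.
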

\begin{proof}
	We modify the reduction in Theorem \ref{tm-partial} to make the automaton strongly connected first. We can assume that in each automaton $A_1, \ldots, A_k$ all states are reachable from the initial state, and at least one accepting state is reachable from each state. Let $t_1, \ldots, t_m$ be the states in $A$ such that adding the transitions from $Y$ to each $t_i$ makes $A$ strongly connected. Define $m$ new letters $\ell_1, \ldots, \ell_m$ to map $Y$ to $t_1, \dots, t_m$ and undefined for all other states. Denote the automaton thus constructed as $A'$. The rank of $A'$ is $1$ by the same argument as in Theorem \ref{tm-partial}. Moreover, if $Q$ is saturated by a word of rank $1$, then each element of $Q$ is first mapped to $Y$, because no letter among $\ell_1, \ldots, \ell_m$ can be applied before that. Thus, the proof of Theorem \ref{tm-partial} works also for $A'$, and $Q$ is saturated by a word of minimum rank if and only if all $A_1, \ldots, A_k$ accept a common word.
	
	To make the automaton binary, we use the reduction described, for example, in Lemma 6 of \cite{Vorel2016}. Define the automaton $A_b = (Q_b, \{0, 1\}, \delta_b)$ as follows. Let $Q_b = Q \times \Sigma$. Let $\Sigma = \{x_1, \ldots, x_n\}$, where $x_n = r$. Define $\delta_b((q, x_i), 0) = (q, x_{i + 1})$ for $1 \le i \le n - 1$, and $\delta_b((q, x_n), 0) = (q, x_n)$. Set $\delta_b((q, x_i), 1) = (\delta(q, x_i), x_1)$. Informally, the letter $0$ emulates choosing a letter of $A'$, and the letter $1$ emulates applying the chosen letter. Moreover, applying the word $0^{n - 1}1$ maps the set $Q_b$ to the set $\{(q, x_1) \mid q \in Q\}$. Because of this correspondence, the automaton $A_b$ is strongly connected, has rank $1$, and satisfies all necessary conditions of the reduction.
\end{proof}

\begin{corollary}
	Checking whether a binary partial automaton where all states are accepting recognizes a birecurrent set is PSPACE-complete.
\end{corollary}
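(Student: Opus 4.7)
The plan is to lift Theorem~\ref{tm-partial-scb} through the characterization attributed to Dolce et al.\ in the introduction: a minimal strongly connected partial automaton (with initial and accepting states) recognizes a birecurrent set if and only if its set of accepting states is saturated by a word of minimum nonzero rank. When every state is accepting, this condition becomes precisely the one treated in Theorem~\ref{tm-partial-scb}, so the reduction from {\sc Finite Automata Intersection} used there should carry over with essentially no change.

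For the PSPACE upper bound I would first minimize the input partial automaton in polynomial time; since minimization never merges an accepting with a non-accepting state, every state of the minimized automaton remains accepting. I then check strong connectivity of the minimized automaton in polynomial time and, if it holds, invoke the preceding corollary to check in PSPACE whether the whole state set is saturated by a word of minimum rank. By the Dolce et al.\ characterization this is equivalent to birecurrence.

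For the lower bound I would take the binary strongly connected partial automaton $A_b$ built in the proof of Theorem~\ref{tm-partial-scb} and declare all its states accepting. That theorem already gives that $A_b$ is binary and strongly connected, has rank~$1$, and that its entire state set is saturated by a word of minimum rank if and only if the given automata $A_1,\ldots,A_k$ share a commonly accepted word. Combined with the Dolce et al.\ characterization this yields PSPACE-hardness, provided $A_b$ is in fact the minimal partial automaton of its language.

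The main obstacle is exactly this minimality check. Since all states are accepting, two states of $A_b$ are Nerode-equivalent iff the sets of words on which their transitions are everywhere defined coincide. I would exploit the explicit structure of $A_b$: the second coordinate $x_i$ of a state $(q,x_i)$ can be recovered from the smallest $j$ such that $0^{j}1$ is defined, which separates states whose second coordinates differ; for a fixed second coordinate, distinguishability descends from distinguishability in $A'$, which can be arranged by demanding that the input automata $A_1,\ldots,A_k$ be minimal and, if necessary, by appending a small fresh gadget that leaves strong connectivity, rank, and the saturation property unchanged.
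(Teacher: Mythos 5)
Your proposal follows essentially the same route the paper intends: the corollary is stated without proof precisely because, once all states are accepting, the Dolce et al.\ characterization turns ``recognizes a birecurrent set'' into ``the whole state set is saturated by a word of minimum nonzero rank,'' which is exactly Theorem~\ref{tm-partial-scb}; the upper bound comes from the earlier saturation corollary. Your additional worry about minimality of $A_b$ is legitimate (the characterization and the very definition of birecurrence are phrased for minimal automata, and the paper silently glosses over this), but your specific mechanism for separating second coordinates does not work as stated: for a state $(q,x_i)$ with $\delta'(q,x_i)$ defined the smallest $j$ with $0^j1$ defined is already $0$, so that statistic does not recover $x_i$; one instead has to compare the full definedness patterns of $0^{a}1$ (using that the letters $\ell_1,\dots,\ell_m$ are undefined outside $Y$) or, as you suggest, add a small gadget. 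Since you flag this and propose a repair that preserves strong connectivity, rank, and saturation, the argument is sound in outline and, apart from that sketched detail, matches the paper's.
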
 

By using the reduction described in the proof, the following corollary of Theorem \ref{thm-sync} can be proved. The only difference is that $x_n$ should be an additional letter acting as a self-loop on each state.

\begin{corollary}
	Checking whether a partial binary automaton is synchronizing is PSPACE-complete.
\end{corollary}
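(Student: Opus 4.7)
The plan is to combine Theorem \ref{thm-sync} with the binarization construction from the proof of Theorem \ref{tm-partial-scb}, following the hint given just before the corollary's statement. Membership in PSPACE is immediate from Lemma \ref{lm-psp-rank}, since a partial automaton is synchronizing if and only if its rank equals $1$, and Lemma \ref{lm-psp-rank} puts rank computation in PSPACE.

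For PSPACE-hardness, I would start from an instance $A_1, \ldots, A_k$ of {\sc Finite Automata Intersection} and construct the partial automaton $A = (Q, \Sigma \cup \{r, z\}, \delta)$ exactly as in the proof of Theorem \ref{thm-sync}, so that $A$ is synchronizing if and only if the $A_i$ share a commonly accepted word. Unlike in Theorem \ref{tm-partial-scb}, no strong-connectivity preprocessing is needed here, since the target property is just synchronization. I would then apply the binarization of Theorem \ref{tm-partial-scb} to $A$: the state set becomes $Q \times \{x_1, \ldots, x_n\}$ with the $x_i$ enumerating the alphabet, the binary letter $0$ advances the second coordinate and self-loops at $x_n$, and $1$ applies $x_i$ to the first coordinate while resetting the second to $x_1$. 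The crucial modification is to take $x_n$ to be a fresh letter added to $A$ that acts as the identity on every state of $Q$, rather than identifying $x_n$ with $r$ as was done in Theorem \ref{tm-partial-scb}.

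With this choice the binary block $0^{n-1}1$ maps each $(q, x_i)$ of $A_b$ to $(q, x_1)$, normalizing the second coordinate without altering the first. Thereafter each $A$-letter $x_j$ is simulated by the binary block $0^{j-1}1$, so any synchronizing word $x_{i_1}\cdots x_{i_\ell}$ of $A$ lifts to the synchronizing binary word $0^{n-1}1 \cdot 0^{i_1-1}1 \cdots 0^{i_\ell-1}1$ of $A_b$. Conversely, a synchronizing binary word for $A_b$ decomposes into maximal blocks of the form $0^{j-1}1$ with $1 \le j \le n$, each simulating an application of $x_j$; erasing the (identity) $x_n$ blocks yields a synchronizing word for $A$ over $\Sigma \cup \{r,z\}$. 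Since adjoining the identity letter $x_n$ does not affect whether $A$ is synchronizing, the full chain of equivalences goes through.

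The step requiring the most care is the backward translation: one must handle trailing $0$'s in the binary word and verify that the partial transitions of $z$ at the states $Y$ and $N$ do not disrupt the block-by-block simulation. However, since the rank counts only defined images, the usual block decomposition handles both issues; any binary word can be truncated to a prefix ending in $1$ without increasing the image, and undefined intermediate transitions only shrink the image further. Combined with the polynomial size of $A_b$, this completes a polynomial-time reduction from {\sc Finite Automata Intersection}, establishing PSPACE-hardness and hence PSPACE-completeness.
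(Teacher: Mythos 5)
Your proposal is correct and follows essentially the same route as the paper: the paper itself only remarks that one applies the binarization of Theorem \ref{tm-partial-scb} to the construction of Theorem \ref{thm-sync}, with $x_n$ taken to be an additional letter acting as a self-loop on each state, which is exactly the reduction you describe (including the key point that the identity row $Q\times\{x_n\}$ makes the backward block-decomposition argument go through despite the partial $z$-transitions).
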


	

The next theorem uses the idea of the proof in \cite{Vorel2016} that checking whether a subset of states is synchronizing is PSPACE-complete in strongly connected complete automata.

\begin{theorem}
	Checking whether a subset of states of a binary strongly connected complete automaton $A$ is saturated by a word of minimum rank with respect to $A$ is PSPACE-complete.
\end{theorem}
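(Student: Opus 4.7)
The plan is to reduce from Theorem \ref{tm-partial-scb} by transforming the binary strongly connected partial automaton constructed there into a binary strongly connected complete automaton together with a distinguished subset of states. Membership in PSPACE follows exactly as for the earlier saturation corollary: nondeterministically guess a word $w$ one letter at a time, verify that $w$ saturates $S$, and check that the rank of $w$ equals the minimum rank of $A$ (computable in PSPACE by Lemma \ref{lm-psp-rank}); then invoke Savitch's theorem.

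For hardness, take the binary strongly connected partial automaton $A' = (Q', \{0, 1\}, \delta')$ produced by Theorem \ref{tm-partial-scb}, for which checking whether $Q'$ is saturated by a rank-$1$ word is PSPACE-hard. Build $A$ on the state set $Q = Q' \cup \{T\}$ (possibly with a small auxiliary gadget attached at $T$), routing every undefined transition of $A'$ into $T$, keeping all existing transitions intact, and defining the transitions out of $T$ so that $A$ is complete and strongly connected while $T$'s behaviour remains distinguishable from the states of $Q'$ under any word of small rank. Take the distinguished subset to be $S = Q'$.

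The analysis has two parts. First, show that the minimum nonzero rank of $A$ equals $2$: the upper bound follows by taking any rank-$1$ word $w'$ of $A'$ and appending a short suffix that drives $T$ back to itself (or into the gadget), yielding a word of rank $2$ in $A$ whose image is $\{Y, T'\}$, with $Y$ the synchronizing target of $A'$ and $T'$ a gadget state distinct from $Y$. The lower bound requires ruling out rank-$1$ words of $A$, which should be achieved by choosing the transitions out of $T$ in the spirit of \cite{Vorel2016}, e.g., via a small binary cycle gadget of carefully chosen length, so that no single word can simultaneously synchronize $Q'$ to $Y$ and drag $T$ to the same state. Second, show the saturation correspondence: a rank-$2$ word $w$ of $A$ saturates $S = Q'$ iff $\delta(Q', w) = \{Y\}$ and $\delta(T, w) \neq Y$, and in that case erasing the excursions through the gadget recovers a rank-$1$ word of $A'$ that saturates $Q'$; the converse packaging is immediate.

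The main obstacle is the lower bound in the first step: after completion, words may bounce from $T$ back into $Q'$ in ways not present in $A'$, so a spurious rank-$1$ word could collapse all of $Q$ to a common state and ruin the reduction. A cycle-gadget construction in the style of \cite{Vorel2016}, with parameters tuned to the structure of rank-$1$ words of $A'$ from the proof of Theorem \ref{tm-partial-scb}, should suffice, but verifying that no spurious rank-$1$ word exists is the delicate step that will consume most of the argument.
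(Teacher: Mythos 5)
Your proposal has a genuine gap, and you have correctly located it yourself: everything hinges on exhibiting a completion in which no spurious rank-$1$ word exists, and you leave that as ``a cycle-gadget construction \dots should suffice.'' With a single absorbing state $T$ receiving all formerly undefined transitions, the completed automaton is strongly connected, and there is no apparent obstruction to some word collapsing all of $Q' \cup \{T\}$ to one state (a word can now legally route non-accepting states through $T$ and back into $Q'$, exactly the ``bounce back'' you worry about); tuning a cycle length is not obviously enough to forbid this, and no argument is given. The same issue infects the second half of your analysis: even granting rank $2$, a rank-$2$ word saturating $S = Q'$ only forces $Q'$ to collapse to a single state \emph{in the completed automaton}, and recovering from this a common accepted word requires ruling out trajectories that pass through $T$, which again needs the missing structural guarantee.

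The paper resolves precisely this difficulty with a different device: instead of one extra sink, it takes \emph{two mirrored copies} of the whole automaton (states $q$ and $\overline q$, plus error states $E$ and $\overline E$), and defines every letter equivariantly with respect to the mirror involution --- in particular the completing letters $\ell_i$ send $Y \mapsto t_i$, $\overline Y \mapsto \overline t_i$, but cross every other state into the opposite copy. This symmetry makes it impossible to ever merge a state with its mirror image, so the minimum rank is provably exactly $2$ (witnessed by $\ell_1$), and the subset to saturate is taken to be $S = Q \cup \{\overline E\}$, one full copy plus the opposite error state. The saturation analysis then goes through because any premature application of $z$ or $\ell_i$ dumps some state of $S$ onto the image of $E$, violating saturation. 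If you want to salvage your single-copy approach you would need to prove the rank lower bound from scratch for your particular gadget; the doubling trick gives it for free and is the idea your proposal is missing. (Your PSPACE membership argument is fine and matches the paper's.)
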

\begin{proof}
	We modify the reduction from the proof of Theorem \ref{tm-partial}. Let $\overline{A} = (\overline{Q}, \Sigma, \overline{\delta})$ be a copy of $A = (Q, \Sigma, \delta)$. For each $q \in Q$ denote by $\overline{q}$ its copy in $\overline{Q}$. Define the automaton $B = (Q_B, \Sigma_B, \delta_B)$ as follows. Set $Q_B = Q \cup \overline{Q} \cup \{E, \overline{E}\}$.
	
	We can assume that in each automaton $A_1, \ldots, A_k$ all states are reachable from the initial state and at least one accepting state is reachable from each state, each automaton accepts at least one word, and none of the automata accepts the whole $\Sigma^*$. Let $t_1, \ldots, t_m$ be the states in $A$ such that adding the transitions from $Y$ to each $t_i$ makes $A$ strongly connected. Define $\Sigma_B$ to be $\Sigma$ together with $m + 1$ new letters $\ell_1, \ldots, \ell_m$ and $z$.
	
	For each $q \in Q_i$ which is not an accepting state of $A_i$, define $\delta_B(q, z) = E$, $\delta_B(\overline{q}, z) = \overline{E}$. Define $\delta_B(E, x) = E$, $\delta_B(\overline{E}, x) = \overline{E}$ for each $x \in \Sigma \cup \{z\}$. For each $q \in Q$, $x \in \Sigma$, define $\delta_B(q, x) = \delta(q, x)$, $\delta_B(\overline{q}, x) = \delta(\overline{q}, x)$. Next, for $1 \le i \le m$ define $\delta_B(Y, \ell_i) = t_i$ and $\delta_B(\overline{Y}, \ell_i) = \overline{t}_i$. Define $\delta_B(q, \ell_i) = \overline{t}_i$ and $\delta_B(\overline{q}, \ell_i) = t_i$ for each $1 \le i \le m$ and $q \in Q \setminus \{Y\} \cup \{E\}$. The automaton $B$ thus constructed is complete and strongly connected.
	
	Note that no pair of states $q, \overline{q}$ can be synchronized. Thus, $B$ has rank $2$, since the word $\ell_1$ has rank $2$ with respect to $B$. We shall prove that the set $S = Q \cup \{\overline{E}\}$ is saturated by a word of rank $2$ if and only if all automata $A_1, \ldots, A_k$ accept a common word.
	
	If the word $w$ is accepted by all automata $A_1, \ldots, A_k$, then the word $rw\ell_1$ saturates the set $S$. Assume now that the set $S$ is saturated by a word $w$ of rank $2$. Let $w = w_1xw_2$, where $w_1$ does not contain any of the letters $z, \ell_1, \ldots, \ell_m$, and $x$ is the first letter equal to $z$ or $\ell_i$. If the image of $Q$ under the mapping defined by $w_1$ has size at least two, then the state $E$ will be mapped to the image of some state of $S$, and thus $S$ is not saturated by $w$. Thus, $w_1$ maps $Q$ to one state, which means that the suffix of $w_1$ after the last occurrence of $r$ is accepted by all automata $A_1, \ldots, A_k$.
	
	To make the automaton binary, we again use the reduction described in the proof of Theorem \ref{tm-partial-scb}.
\end{proof}

\begin{corollary}
	Checking whether a binary complete automaton recognizes a birecurrent set is PSPACE-complete.
\end{corollary}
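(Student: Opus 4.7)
The plan is to combine the preceding theorem with the Dolce et al.\ characterization stated in the introduction: a minimal strongly connected partial automaton (with designated initial and accepting states) recognizes a birecurrent set if and only if its set of accepting states is saturated by a word whose rank equals the rank of the automaton.

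For the PSPACE upper bound, I would proceed as follows. Given a binary complete automaton with initial and accepting states, minimize it in polynomial time, compute the rank $r$ of the minimized automaton by Lemma \ref{lm-psp-rank}, and then nondeterministically guess a word letter by letter, carrying along only the current image of the whole state set and the pre-image of that image. Accept if the guessed word has rank exactly $r$ and saturates the accepting states. Strong connectedness of the minimized automaton and of the subset-construction determinization of its reversal can both be checked in NPSPACE, which equals PSPACE by Savitch's theorem.

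For PSPACE-hardness I would reuse the construction from the preceding theorem. Take the binary strongly connected complete automaton $B$ and subset $S = Q \cup \{\overline{E}\}$ produced there, designate $S$ as the set of accepting states, and pick any state as the initial state. Because $B$ has rank $2$ and is strongly connected, the characterization above says that $B$ recognizes a birecurrent set if and only if $S$ is saturated by a word of rank $2$, which is exactly the PSPACE-hard property shown in the preceding theorem.

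The main technical obstacle is making sure the automaton fed to the characterization is \emph{minimal}, since the Dolce et al.\ criterion is stated for the minimal automaton recognizing the language. The four kinds of states in $B$ (states in $Q$, their barred copies, and the sinks $E$, $\overline{E}$) are largely distinguished by the asymmetric action of the letters $\ell_i$ and $z$ together with the acceptance pattern of $S$, but verifying that no two states are Nerode-equivalent may require a small extra gadget, for example an additional letter or a few new separating states that distinguish any suspect pair without disturbing strong connectedness or the value of the minimum rank. Alternatively, one may observe that minimization preserves strong connectedness, the rank, and the saturation property of the accepting set, so passing to the minimal equivalent of $B$ keeps the characterization applicable and leaves the reduction intact.
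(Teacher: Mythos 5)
Your proposal matches the paper's (implicit) argument: the corollary is stated without proof and is meant to follow by combining the preceding theorem's reduction with the Dolce et al.\ characterization, exactly as you do for both the upper bound and the hardness direction. Your remark about ensuring that $B$ (with $S$ as accepting states) is actually \emph{minimal} is a legitimate subtlety that the paper leaves unaddressed, and either of your proposed fixes (a separating gadget, or arguing that passing to the quotient preserves strong connectedness, the rank, and saturation) would close it.
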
 

\subsubsection*{Acknowledgments}

The author would like to thank Dominique Perrin for communicating the problem of finding the complexity of checking whether an automaton recognizes a birecurrent set to him, as well as for useful suggestions on the content and presentation of the paper.

{\footnotesize
\bibliography{SyncBib}

\newcommand{\etalchar}[1]{$^{#1}$}
\begin{thebibliography}{DPR{\etalchar{+}}17}

\bibitem[DPR{\etalchar{+}}17]{Dolce2017}
Francesco {Dolce}, Dominique {Perrin}, Antonio {Restivo}, Christophe
  {Reutenauer}, and Giuseppina {Rindone}.
\newblock {Recurrent and birecurrent sets}.
\newblock {\em CoRR}, abs/1703.10081, 2017.

\bibitem[GK95]{Goralcik1995}
Pavel Goral\v{c}\'{i}k and V\'{a}clav Koubek.
\newblock Rank problems for composite transformations.
\newblock {\em International Journal of Algebra and Computation},
  05(03):309--316, 1995.

\bibitem[Koz77]{Kozen1977}
Dexter Kozen.
\newblock Lower bounds for natural proof systems.
\newblock In {\em Proceedings of the 18th Annual Symposium on Foundations of
  Computer Science}, pages 254--266. IEEE Computer Society, 1977.

\bibitem[Mar10]{Martyugin2010}
Pavel~V. Martyugin.
\newblock Complexity of problems concerning carefully synchronizing words for
  {PFA} and directing words for {NFA}.
\newblock In {\em CSR 2010. LNCS, vol. 6072}, pages 288--302. Springer,
  Heidelberg, 2010.

\bibitem[Mar14]{Martyugin2014}
Pavel Martyugin.
\newblock Computational complexity of certain problems related to carefully
  synchronizing words for partial automata and directing words for
  nondeterministic automata.
\newblock {\em Theory of Computing Systems}, 54(2):293--304, 2014.

\bibitem[Per13]{Perrin2013}
Dominique Perrin.
\newblock Completely reducible sets.
\newblock {\em International Journal of Algebra and Computation},
  23(4):915--941, 2013.

\bibitem[Rys92]{Rystsov1992}
Igor~K. Rystsov.
\newblock Rank of a finite automaton.
\newblock {\em Cybern Syst Anal}, 28(3):323--328, 1992.

\bibitem[Sip12]{Sipser2006}
Michael Sipser.
\newblock {\em Introduction to the Theory of Computation}.
\newblock Cengage Learning, 3rd edition, 2012.

\bibitem[Vol08]{Volkov2008}
Mikhail~V. Volkov.
\newblock Synchronizing automata and the {{\v{C}}ern{\'y}} conjecture.
\newblock In Carlos Mart{\'i}n-Vide, Friedrich Otto, and Henning Fernau,
  editors, {\em LATA 2008. LNCS, vol. 5196}, pages 11--27. Springer,
  Heidelberg, 2008.

\bibitem[Vor16]{Vorel2016}
Vojt\v{e}ch Vorel.
\newblock Subset synchronization and careful synchronization of binary finite
  automata.
\newblock {\em Int. J. Found. Comput. Sci.}, 27(5):557--578, 2016.

\end{thebibliography}
	\bibliographystyle{alpha}}
	
\end{document}